\theoremstyle{plain}
\newtheorem{thm}{Theorem}[section]
\newtheorem{theorem}[thm]{Theorem}
\newtheorem{problem}[thm]{Problem}
\newtheorem*{claim*}{Claim}
\theoremstyle{remark}
\newcommand{\set}[1]{\{#1\}}
\newcommand{\R}{\mathbb{R}}
\newcommand{\cL}{\mathcal L}
\newcommand{\cP}{\mathcal P}
\newcommand{\defeq}{\coloneqq}
\def\1{\mathbf{1}} 
\def\0{\mathbf{0}}
\DeclareMathOperator{\AC}{AC}
\DeclareMathOperator{\Cov}{Cov}
\title{On depth-3 circuits and covering number: an explicit counter-example}
\author{Lianna Hambardzumyan \thanks{School of Computer Science, McGill University. \texttt{lianna.hambardzumyan@mail.mcgill.ca}} 
\and Hamed Hatami \thanks{School of Computer Science, McGill University. \texttt{hatami@cs.mcgill.ca}. Supported by an NSERC grant.}
\and Ndiam\'e Ndiaye \thanks{Department of Mathematics and Statistics, McGill University.  \texttt{ndiame.ndiaye@mail.mcgill.ca}}}
\begin{document}

\maketitle

\begin{abstract}
We give a simple construction of $n\times n$  Boolean matrices with  $\Omega(n^{4/3})$ zero entries that are free of $2 \times 2$ all-zero submatrices and have covering number   $O(\log^4(n))$. This construction provides an \emph{explicit} counterexample to a conjecture of Pudl\'{a}k, R\"{o}dl and Savick\'{y}~\cite{MR948752} and  Research Problems 1.33, 4.9, 11.17  of Jukna~\cite{MR2895965}.  These conjectures were previously  refuted by Katz~\cite{MR3001038} using a probabilistic construction. 
\end{abstract}

\section{Introduction}

The \emph{covering number} of a Boolean matrix $A$, denoted by $\Cov(A)$, is the smallest number of all-one submatrices covering all the one-entries of $A$.   
Covering number is a well-studied notion in complexity theory as its  logarithm is the nondeterministic communication complexity of the matrix~\cite[Definition 2.3]{MR1426129}. However, 
our study of this notion  is motivated by its connections to circuit complexity.   A fruitful line of work~\cite{10.1109/SFCS.1981.35,DBLP:journals/apal/Ajtai83,10.1109/SFCS.1985.49} in the 80's lead to  H\r{a}stad's~\cite{10.5555/27669} lower bound of $2^{\Omega(k^{\frac{1}{d-1}})}$ on the size of  any depth-$d$  $\AC$-circuit  computing the parity function on $k$ bits. While these bounds are optimal for parity, a counting argument shows that (regardless of the depth of the circuit) most functions require circuits of size $2^{\Omega(k)}$. However, to this day, H\r{a}stad's bound remains the strongest explicit known lower bound against small-depth circuits for any
function, even in the case of $d = 3$. 

The special case of  depth-3 in particular has received significant attention as one of the simplest  restricted models where our understanding is lacking~\cite{MR948752,DBLP:journals/cc/HastadJP95,Lecomte}. In~\cite[page 523]{MR948752},  
Pudl\'{a}k, R\"{o}dl and Savick\'{y} proposed a graph theoretical approach towards obtaining stronger explicit lower bounds on size  of these circuits. Later, Jukna refined this approach further and showed that  a positive answer to the following problem  would  imply  the incredibly strong explicit lower bound of  $2^{\Omega(k)}$ against depth-$3$ circuits (See~\cite[Problem 11.17]{MR2895965} and the discussion that follows it).   

\begin{problem}\label{main-conj}
Does every $n \times n$ Boolean matrix $A$ with at least $dn$ zero-entries and no $2 \times 2$ all-zero submatrix satisfy   $\Cov(A) = d^{\Omega(1)}$?  
\end{problem}

Unfortunately,~\cref{main-conj} has been resolved in the negative by Katz~\cite{MR3001038} using a probabilistic argument. The purpose of this note is to provide a simple \emph{explicit}  counter-example for \cref{main-conj}.  

\section{Main result} 

Our main result  refutes~\cref{main-conj} with $d=\Omega(n^{1/3})$.  

\begin{theorem}\label{main-theorem}
There exist $n \times n$ Boolean matrices $A$ with $\Omega(n^{4/3})$ zero-entries and no  $2 \times 2$ all-zero submatrices that satisfy $$\Cov(A)=O(\log^4(n)).$$
\end{theorem}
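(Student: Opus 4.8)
My plan is to pass to a graph/biclique reformulation, exhibit an explicit $C_4$‑free zero‑pattern with $\Omega(n^{4/3})$ entries, and then build an explicit $O(\log^4 n)$‑rectangle cover of its complement; essentially all the work will sit in the last step.

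\emph{Step 1: reformulation.} Identify $A\in\{0,1\}^{[n]\times[n]}$ with the bipartite graph $G$ on $[n]\sqcup[n]$ whose edges are the $0$‑entries of $A$; the hypothesis ``no $2\times 2$ all‑zero submatrix'' says exactly that $G$ is $C_4$‑free. An all‑one submatrix of $A$ is a pair $(R,C)$ of vertex classes spanning no edge of $G$, so $\Cov(A)$ is the least number of such $G$‑independent rectangles covering every non‑edge of $G$. Equivalently, from rectangles $R_1\times C_1,\dots,R_t\times C_t$ put $f(i)=\{k:i\in R_k\}$ and $g(j)=\{k:j\in C_k\}$; these rectangles are all‑one and cover all $1$‑entries iff $A_{ij}=0\iff f(i)\cap g(j)=\emptyset$, so $\Cov(A)$ is the least $t$ admitting $f,g\colon[n]\to 2^{[t]}$ with this disjointness representation. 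It thus suffices to give an explicit $C_4$‑free bipartite graph on $[n]\sqcup[n]$ with $\Omega(n^{4/3})$ edges together with an explicit family of $O(\log^4 n)$ independent rectangles covering all of its non‑edges.

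\emph{Step 2: the zero‑pattern.} Take a prime $q=\Theta(n^{1/3})$ and index both sides by $\F_q^3$ (padding up to $n=q^3$ is harmless). Declare $(\vec a,\vec c)$ a $0$‑entry exactly when $\vec c$ lies on the curve through $\vec a$ of a fixed one‑parameter family, e.g.\ $\vec c=\vec a+\bigl(t,\varphi(t),\psi(t)\bigr)$ for some $t\in\F_q$, with $\varphi,\psi$ suitable low‑degree polynomials. Each row then has exactly $q$ zeros, so there are $q^4=\Theta(n^{4/3})$ zeros, comfortably below the K\H{o}v\'{a}ri--S\'{o}s--Tur\'{a}n bound $\Theta(n^{3/2})$. $C_4$‑freeness is a rigidity statement: if $\vec a\ne\vec a'$ share a zero‑column $\vec c$ at parameters $t,s$, the first coordinate forces $t-s=a_0'-a_0$, after which $\varphi$ (and, in the degenerate case, $\psi$) pins down $t$ and $s$, so there is at most one common zero‑column; transposing handles two columns. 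Choosing $q\equiv 3\pmod 4$ and $\varphi,\psi$ so that the relevant difference operators are jointly injective makes this work without invoking a planar function.

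\emph{Step 3: the cover (the crux and the main obstacle).} A pair $(\vec a,\vec c)$ is a $1$‑entry iff $c_1-a_1\ne\varphi(c_0-a_0)$ or $c_2-a_2\ne\psi(c_0-a_0)$, so it is enough to cover each event. The real difficulty is that the very rigidity used in Step 2 forces $\varphi$ (or $\psi$) to be non‑affine, so after expanding and changing variables each event is governed by a bilinear interaction between a row‑coordinate and a column‑coordinate, schematically of the form $u-v\ne pr$ over $\F_q$ with $p=p(\vec a)$, $r=r(\vec c)$. The plan is: cover the slices $p=0$ or $r=0$ by the equality gadget (the $2\lceil\log q\rceil$ ``disagree in bit $\delta$'' rectangles for $u\ne v$); for the remaining slices, write $p=g^{\pi}$, $r=g^{\rho}$, and reduce $u-v\ne pr$ to a bounded number of interval/threshold conditions on $\F_q$‑differences, each coverable by a product of two shifted ``greater‑than'' covers of size $O(\log q)$ — a dyadic decomposition of the multiplicative orders $\pi,\rho$ being the natural tool to keep only polylogarithmically many pieces. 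Carrying this out so that at most $O(\log^3 q)$ rectangles are spent per event is, I expect, the genuinely hard step; summing over the $O(1)$ events and absorbing an $O(\log q)$ overhead then gives $\Cov(A)=O(\log^4 n)$. This bilinear gadget is also exactly what blocks the clean $O(\log n)$ bound that a purely probabilistic construction almost—but, by a routine deletion computation, provably not quite—attains, which is why an explicit construction (and the weaker $\log^4$) is the natural target.
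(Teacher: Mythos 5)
Your Step 3 is not a proof but a plan, and you correctly flag it as ``the genuinely hard step''; unfortunately the plan as described cannot be carried out. The basic obstruction is that the gadgets you want to spend do not exist: the one-entries of the greater-than matrix ($x>y$ on $[N]\times[N]$) cannot be covered by fewer than $N-1$ all-one rectangles, since the pairs $\{(x+1,x)\}$ form a fooling set (any rectangle containing $(x+1,x)$ and $(y+1,y)$ with $x<y$ also contains the zero-entry $(x+1,y)$). So ``shifted greater-than covers of size $O(\log q)$'' and ``interval/threshold conditions on $\F_q$-differences'' are not available; only the \emph{inequality} gadget $u\ne v$ genuinely has an $O(\log q)$ cover. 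Worse, by working over a single large prime $q=\Theta(n^{1/3})$ you have placed yourself in a setting (non-incidences of points and curves over $\F_q$) where no polylogarithmic cover is known, and your reduction of $u-v\ne pr$ via discrete logarithms does not decompose across the row/column split: $u-v$ is not computable by either side alone, so its discrete log cannot be dyadically guessed by a rectangle family. Step 2 is also underspecified (``suitable low-degree polynomials'' with ``jointly injective difference operators'' is not a construction), but that part is repairable; Step 3 is the real gap.

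The idea you are missing is to work over the \emph{integers} rather than a finite field, so that the zero-condition is an equality of two bounded integers and the Chinese Remainder Theorem can certify its failure. The paper takes $\cP=\cL=[m]\times[2m^2]$, $n=2m^3$, with $M_{p,\ell}=0$ iff $\ell_1p_1+\ell_2=p_2$; this has $m^4=\Omega(n^{4/3})$ zeros and no $2\times2$ all-zero submatrix because two points determine at most one line. For the cover, both $p_2$ and $\ell_1p_1+\ell_2$ are at most $2m^2$ (here the asymmetric box $[m]\times[2m^2]$ is doing real work), so if they are unequal they already differ modulo some prime $q<k$ with $k=\lceil\log_2(2m^2)\rceil$. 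For each such small prime $q$ and each triple of residues $(a,b,c)\in[q]^3$, the set of points with $p_1\equiv a$, $p_2\not\equiv ab+c \pmod q$ crossed with the lines with $(\ell_1,\ell_2)\equiv(b,c)\pmod q$ is an all-one rectangle, and these $\sum_{q<k}q^3=O(\log^4 n)$ rectangles cover every one-entry. The entire $\log^4$ arises from guessing a prime ($\log$) and three residues ($\log^3$); no bilinear or threshold gadget is needed. I would encourage you to keep Step 1 (the reformulation is correct and standard) and replace Steps 2--3 with an integer point--line construction plus the CRT cover.
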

\begin{proof}[Proof of \cref{main-theorem}]

Let $m$ be a positive integer and define the sets  
$ \cP \defeq \cL  \defeq [m] \times [2m^2]$.  We think of the elements $\ell=(\ell_1,\ell_2) \in \cL$ as  lines $y=\ell_1x + \ell_2$ in $\R^2$, and the elements $p=(p_1,p_2) \in \cP$ as points in $\R^2$. Define the Boolean matrix $M_{\cP,\cL}$ according to point line incidences: 
$$
M_{p,\ell}=
\begin{cases}
0 & p \in \ell\\
1 & p \not\in \ell
\end{cases}, 
$$
or equivalently 
$$M_{p,\ell}=\begin{cases} 0 & \ell_1 p_1+\ell_2=p_2 \\ 1 & \ell_1p_1+\ell_2 \neq p_2 \end{cases},
$$
for every $p=(p_1,p_2)\in \cP$ and $\ell=(\ell_1,\ell_2)\in \cL$. Let $n=2m^3$, and observe that $M$ is an $n \times n$ matrix that satisfies the conditions of~\cref{main-theorem}:
\begin{enumerate}[label=(\roman*)]
    \item $M$ has no $2 \times 2$ all-zero submatrices since there is at most one line in $\cL$ that passes through any two distinct points in $\cP$. 
    \item $M$ has at least $m^4$ zero-entries: For any fixed $p_1,\ell_1\in [m]$, and $\ell_2\in [m^2]$, the point $(p_1,p_2) \defeq (p_1,\ell_1 p_1+\ell_2) \in \cP$ is on the line $(\ell_1,\ell_2) \in \cL$. 
\end{enumerate}

We begin constructing the cover for $M$. Let $q$ be a prime number and take $(a,b,c)\in[q]^3$. Define the sets
$$\cP^{q}_{a,b,c}=\set{(p_1,p_2)\in \cP \ : \  (p_1 \equiv a \mod q) \ \text{ and }\ (p_2\not\equiv ab+c \mod q)},$$
$$\cL^{q}_{a,b,c}=\set{(\ell_1,\ell_2)\in \cL \ : \ (\ell_1 \equiv b \mod q) \ \text{ and } \ (\ell_2 \equiv c\mod q)},$$
and 
$$R^{q}_{a,b,c} = \cP^{q}_{a,b,c}\times \cL^{q}_{a,b,c}.$$
It is clear that for any choice of $q$ and for any $(a,b,c) \in [q]^3$, the set $R^q_{a,b,c}$ is a $1$-monochromatic rectangle of $M$ because any $[(p_1,p_2),(\ell_1,\ell_2)] \in R^q_{a,b,c}$  satisfies $\ell_1p_1+\ell_2 \neq p_2 \mod q$. Next, fix $k = \lceil \log_2 (2m^2) \rceil$ and let $Q_k = \{q : q < k \text{ and  } q \text{ is a prime number}\}$.
Define the set \[S_k=\bigcup_{q\in Q_k} \bigcup_{(a,b,c)\in [q]^3} R^{q}_{a,b,c}.\]

We claim that $S_k$ covers all the one-entries of $M$. If  $\big((p_1,p_2),(\ell_1,\ell_2)\big)$ is not covered by $S_k$, then   for all $q\in Q_k$, we have $p_2=(p_1 \ell_1+\ell_2)\mod q$. Consequently,  $p_2=(p_1\ell_1+\ell_2) \mod \prod_{q\in Q_k} q$. However, since both $p_2$ and $\ell_1 p_1+\ell_2$ are at most $2m^2 < \prod_{q\in Q_{k}} q$,   they can only be equal modulo $\prod_{q\in Q_k} q$ if they are equal. This implies that $M_{p,\ell}=0$, so all the 1's are covered. Thus, by the choice of $k$ and $S_k$, all the one-entries of $M$ can be covered using $O(\log^4 m)$ monochromatic rectangles.

\end{proof}

 Lastly, we note that even though the answer to~\cref{main-conj} is negative, the general approach of using covering number to obtain explicit lower bounds against depth-$3$ circuits could still be possible. For example, \cite[Research Problem 4.8]{MR2895965}  remains open.
\bibliographystyle{plain}
\bibliography{ref}

\end{document}